\newtheorem{lemma}{Lemma}[section]
\newtheorem{theorem}[lemma]{Theorem}
\newtheorem{definition}[lemma]{Definition}
\newcommand{\email}[1]{\href{mailto:#1}{#1}}
\author[1]{Shreyas Pai} 
\author[2]{Sriram V.~Pemmaraju}
\affil[1]{Aalto University, Finland. Email: \email{shreyas.pai@aalto.fi}}
\affil[2]{University of Iowa, USA. Email: \email{sriram-pemmaraju@uiowa.edu}}
\date{}
\DeclareMathOperator*{\E}{\mathbf{E}}
\newcommand{\eps}{\varepsilon}
\newcommand{\poly}{\operatorname{\text{{\rm poly}}}}
\newcommand{\congest}{\textsc{Congest}}
\newcommand{\local}{\textsc{Local}}
\newcommand{\pr}[1]{\Pr\! \left[ {#1} \right]}
\newcommand{\cc}{\textsc{CongClique}}
\newcommand{\ot}{\tilde{O}}
\newcommand{\sparsify}{\textsc{Sparsify}}
\newif\ifdraft%
\newif\ifshort%
\begin{document}

\title{Deterministic Massively Parallel Algorithms for Ruling Sets}
% \author{Shreyas Pai}
% \email{shreyas.pai@aalto.fi}
% \affiliation{%
%   \institution{Aalto University}
%   \city{Espoo}
%   \country{Finland}
% }
% \author{Sriram V.~Pemmaraju}
% \email{sririam-pemmaraju@uiowa.edu}
% \affiliation{%
%   \institution{The University of Iowa}
%   \city{Iowa City}
%   \state{IA}
%   \country{USA}
% }
\maketitle

\begin{abstract}
In this paper we present a \textit{deterministic} $O(\log\log n)$-round algorithm for the $2$-ruling set problem in the Massively Parallel Computation model with $\tilde{O}(n)$ memory; this algorithm also runs in $O(\log\log n)$ rounds in the Congested Clique model.
This is exponentially faster than the fastest known deterministic 2-ruling set algorithm for these models, which is simply the $O(\log \Delta)$-round deterministic Maximal Independent Set algorithm due to Czumaj, Davies, and Parter (SPAA 2020).
Our result is obtained by derandomizing the 2-ruling set algorithm of Kothapalli and Pemmaraju (FSTTCS 2012).
\end{abstract}

\begin{comment}
TEXT ABSTRACT

\end{comment}

\section{Introduction}
There has been substantial progress recently on derandomizing distributed algorithms in ``all-to-all'' communication models such as Congested Clique (\cc{}) and Massively Parallel Computation (MPC).
A recent example is the deterministic maximal independent set (MIS) algorithm due to Czumaj, Davies, and Parter \cite{Czumaj2019}.
By derandomizing the well-known randomized MIS algorithm of Luby \cite{LubySICOMP1986} and Alon, Babai, and Itai \cite{AloneBIJAlg1986} they
obtain an $O(\log \Delta +\log\log n)$-round \emph{deterministic} MIS algorithm in the MPC model, with $O(n^{\eps})$ space on each machine for any constant $\eps > 0$.
This algorithm runs in $O(\log \Delta)$ rounds in the MPC model with $\ot(n)$ memory and in the \cc{} model.

Continuing this trend, in this paper we present a \textit{deterministic} $O(\log\log n)$-round algorithm for the 2-ruling set problem in the MPC model with $\ot(n)$ memory; this algorithm also runs in 
$O(\log\log n)$ rounds in the \cc\ model.
A \textit{2-ruling set} of a graph $G = (V, E)$ is an independent set $I \subseteq V$ such that every vertex in $V$ is at most 2 hops away from some vertex in $I$.
A 2-ruling set is a natural relaxation of an MIS, which can be equivalently viewed as a 1-ruling set.
The fastest known \textit{deterministic} 2-ruling set algorithm in the MPC model with $\ot(n)$ memory (or in the \cc\ model) is simply the aforementioned deterministic MIS algorithm~\cite{Czumaj2019}, that runs in $O(\log \Delta)$ rounds.
If randomness is permitted, 2-ruling sets can be solved in $O(\log\log\log n)$ rounds w.h.p.\footnote{We use ``w.h.p.~as short for ``with high probability'' and it refers to probability at least $1-n^{-c}$ for constant $c \ge 1$.} in both the MPC model with $\ot(n)$ memory and in the \cc{} model~\cite{HegemanPSDISC2014,HegemanPSarxiv2014}.
Thus our result can be viewed as an exponential improvement on the fastest known deterministic algorithm and an exponential closing of the gap between randomized and deterministic 2-ruling set algorithms in these models.

The starting point for this deterministic algorithm is the randomized 2-ruling set algorithm of Kothapalli and Pemmaraju~\cite{kothapalli12_super_fast_rulin_sets} that runs in the \local\ model. In the key step of this algorithm, called \sparsify, a random vertex subset $S$ is repeatedly chosen such that w.h.p.~(i) the subgraph induced by $S$ is sparse and (ii) the neighborhoods of all high degree vertices are ``hit'' by $S$. Our main technical contribution is to show that the
randomness needed in this step can be reduced to \(O(\log n)\) bits.
Subsequently, we show that it is possible to deterministically set the ``random'' bits in $O(1)$ rounds by using the method of conditional expectations, specifically the 
derandomization framework developed by Censor-Hillel, Parter, and Schwartzman~\cite{Censor-HillelPS16}.

\subsection{The MPC and \cc{} Models}\label{sec:MPC}
The \textit{Massively Parallel Computing (MPC)} model was developed and refined in a sequence of papers~\cite{FeldmanMSSSTALG2010,GoodrichSZISAAC2011,karloff2010,BeameKSJACM2017}.
It is defined by a set of machines, each having at most $S$ words of memory. The machines are connected to each other via an all-to-all communication network.
Communication and computation in this model are synchronous.
In each round, each machine receives up to $S$ words from other machines, performs local computation, and sends up to $S$ words to other machines. 
The key characteristic of the MPC model is that both the memory upper bound $S$ and the number of machines used are assumed to be strongly sublinear in the input size $N$, i.e., bounded by $O(N^{1-\delta})$ for some constant $\delta$, $0 < \delta < 1$.
This characteristic models the fact that in modern large-scale computational problems the input is too large to fit in a single machine and is much larger than the number of available machines. 
Thus, for graph problems, where the input is an $n$-vertex, $m$-edge graph, $S$ is bounded by $O((m+n)^{1-\delta})$.
While $S$ is sublinear in the total input size, how it relates to the number of vertices $n$ has
a significant impact on the efficiency of algorithms we are able to design.
Researchers have focused on three regimes: (i) superlinear memory: $S = O(n^{1+\eps})$, for $\eps > 0$, (ii) linear memory: $S = \tilde{O}(n)$, and (iii) sublinear memory: $S = O(n^\eps)$ for $\eps > 0$.
In this paper, we work in the \emph{linear memory MPC} model.

The \cc\ model is a classical model of distributed computation introduced by Lotker,
Pavlov, Patt-Shamir, and Peleg \cite{LotkerPPPSPAA2003}. 
If the input is a graph $G$ with $n$ nodes, the goal is to solve some graph problem on $G$ by 
performing computation at the nodes of $G$. 
Computation and communication proceeds in synchronous rounds and in each round every node can 
send an $O(\log n)$-bit message to every node in $G$, not just the nodes it is adjacent to. Thus the
underlying communication network is $K_n$.

\subsection{Other Related Work}\label{sec:related-work}
Besides the deterministic MIS algorithm mentioned earlier~\cite{Czumaj2019}, Czumaj, Davies, and Parter~\cite{CzumajDPPODC2020} give a deterministic constant round $(\Delta + 1)$-coloring algorithm in \cc{}.
Censor-Hillel, Parter, and Schwartzman~\cite{Censor-HillelPS16} introduced a distributed derandomization framework, based on the method of conditional expectations~\cite{Raghavan1988}, in \congest\ and \cc\ which allowed them to obtain a deterministic $O(\log \Delta \cdot \log n)$ round algorithm for MIS in \cc.
Our overall approach fits in this framework.
They also showed how to obtain a $(2k-1)$-spanner with $O(k n^{1+1/k}\log n)$ edges deterministically in $O(k \log n)$ rounds in \cc. 
Parter and Yogev \cite{ParterYogevDISC2018} improve on the spanner results by giving a deterministic construction of a $(2k-1)$-spanner with $\ot(n^{1+1/k})$ edges in $O(\log k+(\log\log n)^3)$ rounds and a deterministic construction of an $O(k)$-spanner with $O(k n^{1+1/k})$ edges in $O(\log k)$ rounds.
At a high level, the approach of Parter and Yogev is to derandomize a sparse hitting set construction.
The \sparsify\ step in the Kothapalli-Pemmaraju 2-ruling set algorithm can also be viewed as a randomized construction of a sparse hitting set.
Due to this connection, the Parter-Yogev approach provides inspiration for this work.

\subsection{Technical Preliminaries}\label{sec:preliminaries}
A key part of our approach is to reduce the randomness needed in the \sparsify\ step of the Kothapalli-Pemmaraju 2-ruling set algorithm. We achieve this by using a $k$-wise independent family of hash functions. In this section we provide some technical preliminaries on concentration bounds for the sum of $k$-wise independent random variables and the number of random bits needed to construct a family of $k$-wise independent hash functions.
The following tail inequality is from Bellare and Rompel~\cite{BellareRSFCS1994}.
\begin{lemma}
\label{lemma:Chernoff2}
Let $k \ge 4$ be an even integer. Suppose
$Z_1, Z_2, \ldots, Z_t$ are $k$-wise
independent random variables taking values in
$[0, 1]$. Let $Z = \sum_{i=1}^t Z_i$ and
$\mu = \mathbb{E}[Z]$, and let $\lambda > 0$.
Then,
$$\pr{|Z - \mu|\ge \lambda} \le 8\left(\frac{k\mu + k^2}{\lambda^2}\right)^{k/2}.$$
\end{lemma}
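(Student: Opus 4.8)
The plan is to use the $k$-th moment method. Since $k$ is even, $(Z-\mu)^k$ is nonnegative, so Markov's inequality gives
\[
\pr{|Z-\mu|\ge \lambda} = \pr{(Z-\mu)^k \ge \lambda^k} \le \frac{\E[(Z-\mu)^k]}{\lambda^k} = \frac{\E[(Z-\mu)^k]}{(\lambda^2)^{k/2}}.
\]
Comparing with the target inequality, it therefore suffices to establish the central-moment bound $\E[(Z-\mu)^k] \le 8\,(k\mu+k^2)^{k/2}$. The entire difficulty is concentrated in this moment estimate; everything else is an exercise in $k$-wise independence and counting.

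To bound the moment I would first center the variables. Writing $Y_i = Z_i - \E[Z_i]$ we have $\E[Y_i]=0$, $|Y_i|\le 1$, and $Z-\mu = \sum_i Y_i$, so that
\[
\E[(Z-\mu)^k] = \sum_{(i_1,\dots,i_k)} \E[Y_{i_1}\cdots Y_{i_k}].
\]
I would group the terms by the partition of the $k$ slots induced by equal indices. If some index occurs exactly once in a term, then $k$-wise independence lets me factor out that centered variable and the term vanishes because its mean is $0$; hence only terms in which every distinct index appears at least twice survive. For a surviving term with $j$ distinct indices and multiplicities $b_1,\dots,b_j\ge 2$ (so $\sum_l b_l = k$ and thus $j\le k/2$), $k$-wise independence again factorizes the expectation, and since $|Y_i|\le 1$ gives $\E[|Y_i|^{b}] \le \E[Y_i^2] = \sigma_i^2$ for every $b\ge 2$, each distinct index contributes at most $\sigma_i^2$. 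Because $Z_i\in[0,1]$ forces $\sigma_i^2 = \mathrm{Var}(Z_i)\le \E[Z_i]$, summing over the (at worst unrestricted) index choices bounds the contribution of each block by $\sigma^2 := \sum_i \sigma_i^2 = \mathrm{Var}(Z) \le \mu$, where additivity of the variance uses pairwise independence.

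Collecting terms by the number $j$ of distinct indices then reduces everything to a purely combinatorial sum,
\[
\E[(Z-\mu)^k] \le \sum_{j=1}^{k/2} N(k,j)\,(\sigma^2)^j,
\]
where $N(k,j)$ is the number of partitions of a $k$-element set into exactly $j$ blocks, each of size at least $2$. I expect the main obstacle to be bounding $N(k,j)$ tightly and summing this series into the clean closed form $8(k\sigma^2+k^2)^{k/2}$, after which $\sigma^2\le\mu$ finishes the proof. Concretely, I would split into two regimes: the top term $j=k/2$ counts perfect matchings, of which there are $(k-1)!! \le k^{k/2}$, contributing roughly $(k\sigma^2)^{k/2}$; the lower terms $j<k/2$ carry fewer factors of $\sigma^2$ but are more numerous, and a standard estimate (e.g.\ $N(k,j)\le \binom{k}{2j}(2j-1)!!\,j^{\,k-2j}$, obtained by seeding each block with a pair and distributing the remaining $k-2j$ slots, or equivalently reading coefficients off the exponential generating function $\exp(\sigma^2(e^x-1-x))$) shows their total is absorbed into the $k^2$ term. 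Matching the constant $8$ and verifying that the two regimes combine to $(k\sigma^2+k^2)^{k/2}$ is where the bookkeeping lies, with $\sigma^2\le\mu$ applied at the very end.
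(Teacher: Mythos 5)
The paper contains no proof of this lemma for you to be compared against: it is imported verbatim, with a citation, from Bellare and Rompel~\cite{BellareRSFCS1994}, and the constant $8$ is simply inherited from that source. Judged on its own, your plan is sound, and it is in fact the standard route (essentially the one in the cited source): Markov's inequality applied to the even power $(Z-\mu)^k$; the multinomial expansion into centered monomials; the observation that any monomial containing a singleton index has zero expectation, which is legitimate under $k$-wise independence because every monomial involves at most $k$ distinct variables and hence its expectation factorizes exactly as in the fully independent case; and the bounds $|\E[Y_i^{b}]| \le \E[Y_i^2] \le \E[Z_i]$ for $b \ge 2$, giving $\E[(Z-\mu)^k] \le \sum_{j=1}^{k/2} N(k,j)(\sigma^2)^j$ with $\sigma^2 \le \mu$. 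All of these steps are correct as you state them.

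The step you defer, ``matching the constant $8$ and verifying that the two regimes combine,'' is not actually a regime-splitting problem, and it closes with far less pain than you anticipate, using precisely the bound $N(k,j)\le\binom{k}{2j}(2j-1)!!\,j^{k-2j}$ that you propose. The clean way to finish: since $k$ is even,
\[
\binom{k}{2j}(2j-1)!! \;=\; \frac{k!}{2^j\,j!\,(k-2j)!} \;=\; \binom{k/2}{j}\prod_{l=0}^{j-1}(k-2l-1) \;\le\; \binom{k/2}{j}\,k^{j},
\]
so that, using $j^{k-2j}\le k^{k-2j}$,
\[
N(k,j)\,(\sigma^2)^j \;\le\; \binom{k/2}{j}\,k^{k-j}\,(\sigma^2)^j \;=\; \binom{k/2}{j}\,(k\sigma^2)^j\,(k^2)^{k/2-j}.
\]
Each surviving partition class is thus dominated, term by term, by the corresponding term in the binomial expansion of $(k\sigma^2+k^2)^{k/2}$; summing over $j$ and using $\sigma^2\le\mu$ gives $\E[(Z-\mu)^k]\le (k\mu+k^2)^{k/2}$, and Markov then yields the tail bound even with constant $1$ in place of $8$. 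One caution for the write-up: do not instead chain crude estimates such as $k!/(k-2j)!\le k^{2j}$ with weak lower bounds on $\binom{k/2}{j}$; near $j=k/2$ that route loses factors of $e^{\Theta(k)}$, which no constant prefactor can absorb. The exact identity above is what makes the bookkeeping painless, and with it your outline becomes a complete and correct proof.
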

\noindent
The following is a paraphrased version of Definition 3.31 in Vadhan's work~\cite{VadhanFTTCS2012}.
\begin{definition}
For $N$, $M$, $k \in \mathbb{N}$, such that $k \le N$, a family of functions $\mathcal{H} = \{h: [N] \to [M]\}$ is $k$-wise independent if for all distinct $x_1,x_2,\ldots,x_k
\in [N]$, the random variables $h(x_1), h(x_2), \ldots, h(x_k)$ are independent and uniformly distributed in
$[M]$ when $h$ is chosen uniformly at random from
$\mathcal{H}$.
\end{definition}
\noindent
The following lemma appears as Corollary 3.34 in Vadhan's work~\cite{VadhanFTTCS2012}.
\begin{lemma}
\label{lemma:randomBits}
For every $n,m,k$, there is a family of $k$-wise independent hash functions $\mathcal{H} = \{h:\{0,1\}^n \to \{0,1\}^m\}$
such that choosing a random function from $\mathcal{H}$
takes $k\cdot \max\{n, m\}$ random bits, and evaluating a function from $\mathcal{H}$ takes $\text{poly}(n, m, k)$
computation.
\end{lemma}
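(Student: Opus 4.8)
The plan is to give the standard polynomial-based construction of a $k$-wise independent family and then verify the two claimed resource bounds. Set $\ell = \max\{n,m\}$ and fix an identification of $\{0,1\}^\ell$ with the finite field $\mathbb{F}_{2^\ell}$ via an $\mathbb{F}_2$-linear basis, so that $\ell$-bit strings and field elements are interchangeable and each field operation costs $\text{poly}(\ell)$. First I would embed the domain $\{0,1\}^n$ injectively into $\mathbb{F}_{2^\ell}$ by zero-padding each $n$-bit string to length $\ell$ (valid since $n \le \ell$), and I would map the field back to the range $\{0,1\}^m$ by the projection $\pi$ that retains the first $m$ coordinates (valid since $m \le \ell$). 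When $m = \ell$ this projection is the identity, and when $n = \ell$ the padding is the identity, so this single setup handles all relations between $n$ and $m$.

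The family itself is indexed by coefficient tuples: for each $(a_0,a_1,\ldots,a_{k-1}) \in \mathbb{F}_{2^\ell}^k$ define the degree-$(k-1)$ polynomial $p(x) = \sum_{i=0}^{k-1} a_i x^i$ and set $h(x) = \pi\bigl(p(\bar{x})\bigr)$, where $\bar{x} \in \mathbb{F}_{2^\ell}$ is the embedding of the input. Choosing $h$ uniformly from $\mathcal{H}$ then amounts to choosing the $k$ coefficients uniformly and independently, which requires exactly $k \cdot \ell = k \cdot \max\{n,m\}$ random bits, giving the first claimed bound. Evaluating $h$ reduces to evaluating $p$ at one point, which by Horner's method uses $O(k)$ field operations and hence $\text{poly}(n,m,k)$ computation, giving the second bound.

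The crux is $k$-wise independence. Fix distinct inputs $x_1,\ldots,x_k \in \{0,1\}^n$; their embeddings $\bar{x}_1,\ldots,\bar{x}_k$ are distinct elements of $\mathbb{F}_{2^\ell}$, so the linear map $(a_0,\ldots,a_{k-1}) \mapsto (p(\bar{x}_1),\ldots,p(\bar{x}_k))$ is given by the $k \times k$ Vandermonde matrix in these points, which is invertible over the field precisely because the points are distinct. Hence when the coefficient vector is uniform over $\mathbb{F}_{2^\ell}^k$, the value vector $(p(\bar{x}_1),\ldots,p(\bar{x}_k))$ is uniform over $\mathbb{F}_{2^\ell}^k$; equivalently, the $p(\bar{x}_j)$ are independent and uniform on $\{0,1\}^\ell$. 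Applying the fixed deterministic projection $\pi$ coordinatewise preserves independence, and since each fiber of $\pi$ has the same size $2^{\ell-m}$, it sends the uniform distribution on $\{0,1\}^\ell$ to the uniform distribution on $\{0,1\}^m$. Therefore $h(x_1),\ldots,h(x_k)$ are independent and uniform on $\{0,1\}^m$, as required.

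The main obstacle I anticipate is bookkeeping rather than conceptual: one must simultaneously ensure that the domain embedding is injective and the range projection has equal-size fibers for \emph{every} relation between $n$ and $m$, so that the single field $\mathbb{F}_{2^\ell}$ with $\ell = \max\{n,m\}$ covers all cases while the random-bit count stays at $k \cdot \max\{n,m\}$ (one also implicitly needs $k \le 2^n$ so that $k$ distinct domain points exist, consistent with the constraint $k \le N$ in the preceding definition). Beyond this, the Vandermonde argument and the Horner evaluation bound are entirely standard.
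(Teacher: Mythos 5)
Your proof is correct. The paper does not actually prove this lemma — it imports it verbatim as Corollary 3.34 of Vadhan's pseudorandomness survey — and your construction (degree-$(k-1)$ polynomials over $\mathbb{F}_{2^{\ell}}$ with $\ell = \max\{n,m\}$, zero-padding the domain and truncating the output, with the Vandermonde argument for $k$-wise independence) is precisely the standard argument underlying that cited corollary, so you have in effect reconstructed the source's proof rather than deviated from it.
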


\paragraph*{Notation:} In the rest of the paper we use $N(v)$ to denote the set of neighbors of a node $v$, $N^{+}(v)$ to denote $\{v\} \cup N(v)$, $N^{+}(S)$ to denote $\cup_{v \in S} N^{+}(v)$ for any subset $S$ of nodes, and $E(G)$ denote the set of edges in $G$.

\section{Deterministic Ruling Set Algorithm}\label{sec:det-rs-mpc}
We obtain our result in three steps: (i) we use limited dependence in order to
reduce the amount of randomness used by a key sampling step in the Kothapalli-Pemmaraju 2-ruling set algorithm~\cite{kothapalli12_super_fast_rulin_sets}, (ii) we then use the method of conditional expectations, following ideas of Censor-Hillel, Parter, and Schwartzman~\cite{Censor-HillelPS16}, and derandomize this sampling step, and (iii) we show that with this derandomized sampling step in place, the Kothapalli-Pemmaraju 2-ruling set algorithm can be implemented as a deterministic algorithm in the linear memory MPC model and in the \cc\ model in $O(\log\log n)$ rounds.

\subsection{Reducing Randomness via Limited Dependence}
Let \(G(V, E)\) be a graph with \(n\) vertices, \(m\) edges, and maximum degree \(\Delta\). If we sample each vertex with probability \(\alpha/\sqrt{\Delta}\), where $\alpha$ is a small enough constant, then the subgraph induced by the sampled vertices has \(O(n)\) edges in expectation and
every vertex with degree at least \(\sqrt{\Delta}\log n\) has a sampled vertex in its neighborhood w.h.p.
In the MPC model with linear memory, we can gather the sampled subgraph at a single machine and compute an MIS \(I\) of this subgraph.  Therefore, w.h.p.~the MIS \(I\) of the sampled subgraph forms a \(2\)-ruling set of the sampled vertices plus the vertices with degree at least \(\sqrt{\Delta}\log n\). Thus, we can deactivate all the high degree vertices and work with a graph of maximum degree \(\sqrt{\Delta}\log n\).

Our goal in this section is to show how to reduce the randomness used by this sampling step. 
In order to do this, we assume that the every node knows its color $c_u$ in a proper node-coloring of palette size $O(\Delta^3)$.
Node \(u\) applies an appropriate random hash function on its color \(c_u\) to make its random choice.
More specifically, nodes choose a random hash function $h_s$ from the family
$$\mathcal{H} = \{h_{s} \colon [O(\Delta^3)] \to [f] \mid s \in \{0,1\}^{r} \}$$
of $k$-wise independent hash functions, where $k = O(\log n/\log \Delta)$.
Here $1 \le f \le \sqrt{\Delta}$ is a parameter and $r$ is the length of the random binary seed $s$ that defines $h_s$. 
Note that the domain of $h_s$ is the palette of colors assigned to nodes. Thus, for any node $u$, $h_s(u)$ takes on a value in $[f]$ uniformly at random. This implies that picking nodes $u$ with $h_s(u) = 1$ is equivalent to sampling nodes with probability $1/f$. 
To obtain an $O(\Delta^3)$-coloring we apply one iteration of the color reduction technique of Linial~\cite{linial92_local_distr_graph_algor} which converts a proper coloring of \(\kappa\) colors to an \(O(\Delta^2 \log \kappa)\)-coloring.
We can use this color reduction technique on the coloring induced by the ID's to get an a \(O(\Delta^2 \log n) = O(\Delta^3)\)-coloring in one round, assuming \(\Delta \ge \log n\).

The following lemma establishes the randomness reduction result we need. It essentially shows that the subset of nodes sampled using $O(\log n)$ random bits, one can sample a random hash function $h$ from the family $\mathcal{H}$ mentioned above. This hash function can in turn be used to sample a subset $Z_h$ of nodes and $Z_h$ has the properties we need: (i) it is sparse and (ii) it ``hits'' all large enough neighborhoods.

\begin{lemma}\label{lem:MPC-RS-seed}
Let \(G(V, E)\) be a graph with \(n\) vertices, \(m\) edges, and maximum degree \(\Delta = \Omega(\log^{4/(3\eps)} n)\) for a parameter \(0 \le \eps \le 1\). Let \(\mathcal{C} = \{c_u  \mid  u \in V\}\) be a proper \(O(\Delta^3)\)-coloring of \(G\).
Then, for a parameter $f$, \(1 \le f \le \sqrt{\Delta}\) and any constant $c > 0$, there exists a family of \(O(\log n/\log\Delta)\)-wise independent hash functions \(\mathcal{H} = \{h_{s} \colon [O(\Delta^3)] \to [f] \mid s \in \{0,1\}^{r} \}\) such that choosing a random function \(h \in \mathcal{H}\) takes \(r=O(\log n)\) random bits and for \(Z_h = \{u \in V \mid h(c_u)=1 \}\) it holds that:
\begin{enumerate}
  \item \(\E[|E(G[Z_h])|] \le \beta m/f^2\) for some constant $\beta > 0$, and
  \item For all \(v \in V \setminus Z_h\) such that \(|N(v)| \ge f \Delta^{\eps}\), \(\pr{ N(v) \cap Z_h \ne \emptyset } \ge 1-1/n^c\).
\end{enumerate}
\end{lemma}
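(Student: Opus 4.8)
The plan is to establish the three claims in turn, starting from the randomness budget. First I would compute the seed length by feeding the right parameters into Lemma~\ref{lemma:randomBits}: the domain of each $h_s$ is the palette $[O(\Delta^3)]$, encoded by $O(\log\Delta)$ bits, and the range $[f]$ with $f \le \sqrt{\Delta}$ is encoded by at most $\tfrac12\log_2\Delta + O(1) = O(\log\Delta)$ bits. With $k = O(\log n/\log\Delta)$ this yields a seed of length $k\cdot O(\log\Delta) = O(\log n)$, as required. This fixes the family $\mathcal{H}$, and for the two structural properties I only need that $\mathcal{H}$ is (at least) pairwise independent and $k$-wise independent respectively; the constant hidden in $k$ will be chosen large enough (in terms of $c$ and $\eps$) that $k \ge 4$, so Lemma~\ref{lemma:Chernoff2} applies.

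For the sparsity bound (item~1) I would argue edge by edge. Fix $\{u,w\}\in E$. Since $\mathcal{C}$ is a proper coloring, $c_u \ne c_w$, so pairwise independence makes $h(c_u)$ and $h(c_w)$ independent and uniform on $[f]$; hence $\pr{h(c_u)=1 \wedge h(c_w)=1} = 1/f^2$, i.e.\ the edge survives in $G[Z_h]$ with probability exactly $1/f^2$. Summing over all $m$ edges and using linearity of expectation gives $\E[|E(G[Z_h])|] = m/f^2$, so item~1 holds with $\beta = 1$. This step is essentially forced by the construction and I expect no difficulty here.

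The heart of the proof is the hitting bound (item~2), which I would derive from the Bellare--Rompel tail inequality. The delicate point is that sampling is driven by colors rather than vertices: if two neighbors of $v$ share a color their indicators coincide, so the per-neighbor indicators are \emph{not} $k$-wise independent and the tail bound cannot be applied to them directly. To handle this I would pass to the set $C_v = \{c_u : u \in N(v)\}$ of distinct colors occurring in $N(v)$ and set $Y_c = \mathbf{1}[h(c)=1]$ for $c \in C_v$. These are genuinely $k$-wise independent, and the failure event $N(v)\cap Z_h = \emptyset$ is exactly $\{Y=0\}$ for $Y = \sum_{c\in C_v} Y_c$, with $\mu := \E[Y] = |C_v|/f$. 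Applying Lemma~\ref{lemma:Chernoff2} with $\lambda = \mu$ gives $\pr{Y=0}\le \pr{|Y-\mu|\ge\mu}\le 8\bigl((k\mu+k^2)/\mu^2\bigr)^{k/2}$, which for $\mu \ge k$ is at most $8(2k/\mu)^{k/2}$. Substituting $k=\Theta(\log n/\log\Delta)$ and $\mu=\Omega(\Delta^\eps)$ and taking logarithms, the exponent $\tfrac{k}{2}\ln(\mu/2k)$ becomes $\Theta(\eps\log n)$; this is precisely where the hypothesis $\Delta = \Omega(\log^{4/(3\eps)} n)$ enters, forcing $\mu=\Omega(\Delta^\eps)=\Omega(\log^{4/3} n)$ to be large enough relative to $k$, so that the tail drops below $n^{-c}$ once the constant in $k$ is chosen large enough.

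The main obstacle, and the step I would spend the most care on, is the reduction to distinct colors together with the lower bound $|C_v| = \Omega(f\Delta^\eps)$ that underlies $\mu\ge\Delta^\eps$: the degree hypothesis only gives $|N(v)|\ge f\Delta^\eps$, so I must rule out (or otherwise absorb) the case in which many neighbors collapse onto few colors, and it is exactly here that the structure of $\mathcal{C}$ must be invoked. A secondary obstacle is the parameter bookkeeping: one has to verify $\mu\ge k$ and that $\tfrac{k}{2}\ln(\mu/2k)$ clears $c\ln n$ uniformly across the whole range $\log^{4/(3\eps)} n \lesssim \Delta \le n$, with the two extremes (near the threshold and $\Delta=n^{\Omega(1)}$) giving the binding constraints on the constant in $k$.
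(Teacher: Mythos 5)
Your seed-length computation and your item~1 are exactly the paper's argument (the paper likewise sums $\pr{h(c_u)=h(c_v)=1}\le O(1/f^2)$ over edges, using properness of $\mathcal{C}$ implicitly so that $h(c_u)$ and $h(c_v)$ are independent), and your plan for item~2 --- the Bellare--Rompel bound applied to neighborhood indicators --- is structurally the paper's plan as well. The one place you diverge is the place where you are more careful than the paper: the paper applies Lemma~\ref{lemma:Chernoff2} directly to $Y=\sum_{u\in N(v)}Y_u$ with $Y_u=\mathbf{1}[h(c_u)=1]$, implicitly treating these per-neighbor indicators as $k$-wise independent. As you observe, they need not be: two non-adjacent neighbors of $v$ may share a color, in which case their indicators coincide. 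Your pass to the distinct-color set $C_v$ is the correct repair of that step.

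However, the obstacle you then flag --- lower-bounding $|C_v|$ --- is a genuine gap, and it cannot be closed from the stated hypotheses, because the statement itself fails for arbitrary proper colorings of $G$. Concretely, let $G$ be a star with center $v$ and $\Delta = n-1$ leaves, take $\eps = 1/3$ and $f=\sqrt{\Delta}$ (so $|N(v)|=\Delta\ge f\Delta^{\eps}$), and let $\mathcal{C}$ be the proper $2$-coloring in which every leaf gets the same color $c$. Every $k$-wise independent family has uniform marginals, so $\pr{N(v)\cap Z_h\neq\emptyset}=\pr{h(c)=1}=1/f=\Delta^{-1/2}$, nowhere near $1-1/n^{c}$, and no choice of $\mathcal{H}$ avoids this (conditioning on $v\notin Z_h$ changes nothing, by pairwise independence). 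So properness of $\mathcal{C}$ on $G$ buys nothing for item~2; what the argument actually needs is that $\mathcal{C}$ be proper on $G^{2}$ (a distance-2 coloring), which forces all vertices of $N(v)$ to carry distinct colors, giving $|C_v|=|N(v)|\ge f\Delta^{\eps}$, after which your tail computation (and the paper's) goes through. In short: your proposal follows the paper's route but refuses to wave through precisely the step the paper waves through; the step is unjustifiable as stated, and the lemma's hypothesis (and the algorithm, which feeds it a Linial coloring of $G$, not of $G^{2}$) must be strengthened to a distance-2 coloring for any proof along these lines --- including the paper's own --- to be completable.
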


\begin{proof}
Since we are using a family of \(k\)-wise independent hash functions $\mathcal{H} = \{h: [N] \to [M]\}$
for $k = O(\log n/\log \Delta)$, $N = O(\Delta^3)$, and $M = f \le \sqrt{\Delta}$, Lemma \ref{lemma:randomBits} tells us that $r = k\cdot\min\{O(\log M), O(\log N)\} = O(\log n)$ random bits suffice to pick a random hash function $h$ from $\mathcal{H}$.

First we prove that \(\E[|E(G[Z_h])|] \le O(m/f^2)\). Let \(X_e\) be a random variable indicating whether the edge \(e \in E\) belongs to \(E(G[Z_h])\) or not. For an edge \(e = (u, v)\), we have \(X_e = 1\) if and only if both end points \(u,v \in Z_h\). Since \(h\) belongs to a family of \(O(\log n/\log\Delta)\)-wise independent hash functions, the random bits output by \(h\) are at least pairwise independent. Therefore, \(\pr{X_e = 1} \le \pr{h(c_u) = h(c_v) = 1} = \pr{h(c_u) = 1} \cdot \pr{h(c_v) = 1} \le O(1/f^2)\). Therefore by linearity of expectations, \(\E[|E(G[Z_h])|] = \sum_{e \in E} \E[X_e] \le O(m/f^2)\).

Now we prove the second part of the lemma. Consider a vertex \(v \in V \setminus Z_h\) such that \(|N(v)| \ge f \Delta^\eps\). Let \(Y_u\) be a random variable indicating whether \(u \in Z_h\) or not. And let \(Y\) be the random variable denoting the number of neighbors of \(v\) that are in \(Z_h\). If none of the neighbors of \(v\) belong to the set \(Z_h\), then it means \(Y = \sum_{u \in N(v)} Y_u = 0\). Let \(\mu = \E[Y] = \sum_{u \in N(v)}\E[Y_u] \ge \sum_{u \in N(v)}\pr{Y_u = 1} \ge O(|N(v)|/f) \ge \Delta^{\eps}\). Since \(h\) belongs to a family of \(k = (32c/\eps)\log n/\log\Delta\)-wise independent hash functions we can use limited independence Chernoff bound from Lemma~\ref{lemma:Chernoff2} to show concentration around the expected value.
\begin{align*}
\pr{|Y - \mu|\ge \mu/2} &\le 8\left(\frac{4k\mu + 4k^2}{\mu^{2}}\right)^{k/2} \le 8 \left(\frac{4 \cdot 32c\log n}{\eps\log\Delta \cdot \Delta^\eps} + \frac{4 \cdot (32c\log n)^2}{(\eps\log\Delta \cdot \Delta^{\eps})^2}\right)^{k/2} \\
\therefore \qquad \pr{|Y - \mu|\ge \mu/2} &\le 8\left(\Delta^{-\eps/4}\right)^{\frac{16c\log n}{\eps\log \Delta}} \le 8 \cdot 2^{-4c\log n} \le \frac{1}{n^{2c}}
\end{align*}

Where the third inequality holds because we assume \(\Delta =  \Omega(\log^{4/(3\eps)} n)\) which implies \(\log n = O(\Delta^{3\eps/4})\). Therefore, we get \(\pr{N(v)\cap Z_h = \emptyset} = \pr{Y = 0} \le \pr{ Y \le \Delta^{\eps}/2} \le 1/n^{2c}\). A union bound over all such vertices \(v\) proves the second property and hence, the lemma.
\end{proof}

\subsection{Applying the Method of Conditional Expectations}
Note that even though we showed that \(Z_{h}\) can be sampled using a random seed of size \(O(\log n)\) random bits, this by itself does not imply a deterministic algorithm.
A trivial way to derandomize the sampling of $Z_h$ is to consider all possible values of the seed and pick the one that gives a set with the required properties; such a setting is guaranteed to exist.
But, there are polynomially many choices, so this is not an efficient approach.
In the proof of the following theorem we use the method of conditional expectations and follow the scheme of~\cite{Censor-HillelPS16} for implementing this method in the MPC and \cc\ models.

\begin{theorem}
\label{thm:generalderand}
Let \(G(V, E)\) be a graph with \(n\) vertices, \(m\) edges, and maximum degree \(\Delta = \Omega(\log^{4/(3\eps)} n)\) for parameter \(\eps > 0\). And let \(1 \le f \le \sqrt{\Delta}\) be another parameter. We can design a deterministic algorithm \(\mathcal{A}_{det}\) that constructs a set \(Z \subseteq V\) such that \(|E(G[Z])| \le O(m/f^2)\) and for all vertices \(v \in V \setminus Z\) such that \(|N(v)| \ge f \Delta^\eps\), $N(v) \cap Z \ne \emptyset$.
This set \(Z\) can be constructed in \(O(1)\) rounds of the linear memory MPC and \cc{} models.
\end{theorem}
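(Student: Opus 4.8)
The plan is to express both requirements as a single nonnegative pessimistic estimator whose expectation over the random seed $s$ is strictly below $1$, and then drive this estimator below $1$ deterministically by fixing the $r=O(\log n)$ seed bits with the method of conditional expectations. Write $X=|E(G[Z_h])|$ and, for each vertex $v$ with $|N(v)|\ge f\Delta^\eps$, let $B_v$ be the indicator of the bad event $N(v)\cap Z_h=\emptyset$. A seed making
\[
\Phi(s)\;=\;\frac{f^2}{2\beta m}\,X\;+\sum_{v:\,|N(v)|\ge f\Delta^\eps} B_v
\]
satisfy $\Phi(s)<1$ is exactly what we want: since $\sum_v B_v$ is a nonnegative integer, $\Phi<1$ forces $\sum_v B_v=0$ (so every large neighbourhood is hit) while simultaneously $X<2\beta m/f^2=O(m/f^2)$. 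By the two parts of Lemma~\ref{lem:MPC-RS-seed}, $\E[X]\le\beta m/f^2$ and $\E[B_v]\le n^{-c}$, so $\E_s[\Phi]\le\tfrac12+n^{1-c}<1$ for a large enough constant $c$; hence a good seed exists.

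To find such a seed deterministically I would run the method of conditional expectations following the scheme of Censor-Hillel, Parter, and Schwartzman~\cite{Censor-HillelPS16}. The seed is fixed in blocks of $\approx\log n$ bits, so since $r=O(\log n)$ there are only $O(1)$ blocks. When fixing the $j$-th block there are $\approx n$ candidate assignments $a$, and for each we want $\E[\Phi\mid s_{<j},\,\mathrm{block}_j=a]$, keeping the $a$ that minimizes it; the conditional-expectation inequality keeps the running value below $1$. Since $\Phi$ is a sum of a per-edge and a per-vertex term, each machine can accumulate over the edges (each assigned to one endpoint) and the vertices it holds a length-$\approx n$ vector whose $a$-th entry is its contribution to $\E[\Phi\mid s_{<j},a]$; summing these vectors coordinatewise across machines is a single all-to-all exchange in which each machine handles an $O(1)$ fraction of the $n$ coordinates, respecting both the $O(n\log n)$ per-round bandwidth of \cc\ and the $\ot(n)$ memory of linear-memory MPC. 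A min-aggregation and a broadcast then fix the block, so each of the $O(1)$ blocks costs $O(1)$ rounds, giving $O(1)$ rounds overall.

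The step I expect to be the main obstacle is evaluating the hitting contribution: under mere $k$-wise independence, $\E[B_v\mid\cdot]$ is not determined by local data, because $B_v$ depends on all $|N(v)|$ hash values whereas only $k=O(\log n/\log\Delta)$ of them are jointly controlled. I would resolve this exactly as in the tail argument of Lemma~\ref{lem:MPC-RS-seed}: replace $B_v$ by the moment surrogate $\psi_v=(Y_v-\mu_v)^k/(\mu_v/2)^k$, where $Y_v=\sum_{u\in N(v)}Y_u$ and $\mu_v=\E[Y_v]\ge\Delta^\eps$. Because $k$ is even, $\psi_v\ge B_v$ pointwise, so running conditional expectations on $\Psi=\tfrac{f^2}{2\beta m}X+\sum_v\psi_v$ down to $\Psi<1$ still yields $\Phi<1$; and $\E_s[\Psi]<1$ since $\E[\psi_v]$ equals the Bellare--Rompel bound of Lemma~\ref{lemma:Chernoff2}, shown to be at most $n^{-2c}$ in the proof of Lemma~\ref{lem:MPC-RS-seed}. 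The payoff is computability: expanding $(Y_v-\mu_v)^k$ yields degree-$k$ monomials, each a product of at most $k$ distinct indicators, whose conditional expectation under $k$-wise independence is just the conditional probability that a fixed set of at most $k$ colours all hash to $1$ -- a quantity determined by the explicit family and evaluable in $\poly(n)$ local computation by Lemma~\ref{lemma:randomBits} (the sparsity term needs only pairwise independence and is computed directly). Combining this surrogate estimator with the block-wise derandomization above produces $\mathcal{A}_{det}$ and the claimed $O(1)$-round bound.
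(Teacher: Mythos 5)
Your proposal follows the same skeleton as the paper's proof: fold both requirements into a single nonnegative potential whose expectation is controlled by Lemma~\ref{lem:MPC-RS-seed}, then fix the $O(\log n)$-bit seed in $O(1)$ blocks of roughly $\log n$ bits via the method of conditional expectations, evaluating the $\approx n$ candidate assignments of each block in parallel and broadcasting the minimizer (the paper uses the penalty potential $\Psi = E_A + n^4\sum_u X_u$ driven down to $O(m/f^2)$ rather than your rescaled $\Phi<1$; this difference is cosmetic). The genuine divergence is your ``main obstacle'' and the moment surrogate $\psi_v$, and here you have misdiagnosed the situation. The obstacle is illusory: because the entire seed has only $r = O(\log n)$ bits, a node $u$ that knows the colors of $N^{+}(u)$, the explicit description of $\mathcal{H}$, and the partially fixed seed can compute $\pr{X_u = 1 \mid Y_1,\ldots,Y_i}$ and $\E[E_u \mid Y_1,\ldots,Y_i]$ \emph{exactly}, by enumerating all $\poly(n)$ completions of the seed and evaluating $h_s$ on each neighbor color (polynomial local computation by Lemma~\ref{lemma:randomBits}); this brute-force evaluability is precisely what the paper's proof relies on, since $k$-wise independence is needed only to bound the \emph{unconditional} expectations in Lemma~\ref{lem:MPC-RS-seed}, not to evaluate conditional ones. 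Moreover, your surrogate does not actually escape this enumeration: once part of the seed is fixed, the hash values are no longer $k$-wise independent, so the degree-$\le k$ monomials in the expansion of $(Y_v-\mu_v)^k$ do not factorize and must themselves be evaluated from the explicit family by the same seed enumeration --- at which point one may as well evaluate $\E[B_v \mid \cdot]$ directly. Finally, your surrogate introduces a small technical debt of its own: Lemma~\ref{lemma:Chernoff2} as quoted is a tail inequality, and a tail bound does not by itself give the central-moment bound $\E[(Y_v-\mu_v)^k] \le 8(k\mu_v + k^2)^{k/2}$ that your claim $\E[\psi_v] \le n^{-2c}$ requires; you would need to invoke the moment estimate from inside the Bellare--Rompel proof. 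In summary, your argument can be made to work, but the surrogate machinery is unnecessary, and the direct route with exact conditional expectations --- as in the paper --- is both simpler and fully rigorous.
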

\begin{proof}
We describe the algorithm \(\mathcal{A}_{det}\) in linear memory MPC; extending it to \cc{} using Lenzen's routing protocol~\cite{lenzen13_optim_deter_routin_sortin_conges_clique} is straightforward.
We first define some random variables.
Let \(E_A\) be the random variable denoting the number of edges \(G[Z_h]\) where \(Z_h\) is the set of nodes constructed in Lemma~\ref{lem:MPC-RS-seed}.
For node \(u \in V\) let \(X_u\) be indicator random variable for the event that \(|N(u)| \ge f \Delta^\eps\) and \(N^{+}(u) \cap Z_h=\emptyset\). For all \(u \in Z_h\) we have \(\pr{X_u=1} = 0\) and by Lemma~\ref{lem:MPC-RS-seed}, for all \(u \in V \setminus Z_h\) we have \(\pr{X_u=1}\leq 1/n^c\).
Define \(\Psi = E_A + n^{4}\sum_{u \in V} X_u\).
We also have \(\E[\Psi] = \E[E_A] + n^{4}\sum_u \pr{X_u=1} \le \beta m/f^2 + n^{4-c} = O(m/f^2)\).

Lemma \ref{lem:MPC-RS-seed} tells us that the random variables $E_A$ and $X_u$ (and hence $\Psi$) are determined
by an $r = O(\log n)$ bit random seed. 
We will use the method of conditional expectations to deterministically set these $r$ ``random'' bits in $t$ chunks,
$Y_1, Y_2, \ldots, Y_t$, of \(\lfloor \log n \rfloor\) bits each, for $t = O(1)$ such that
$\E[\Psi \mid Y_1, Y_2, \ldots, Y_i] \le \E[\Psi \mid Y_1, Y_2, \ldots, Y_{i-1}]$ for $i =1, 2, \ldots, t$.
Thus, $\E[\Psi \mid Y_1, Y_2, \ldots, Y_t] \le \E[\Psi] = O(m/f^2)$.
But, $\Psi$ conditioned on $Y_1, Y_2, \ldots, Y_t$ is completely determined and its value can be bounded
above by $O(m/f^2)$ only when $E_A = O(m/f^2)$ and $X_u = 0$ for all $u$.

We will now show how to compute the assignment of each chunk of the seed in \(O(1)\) rounds. Since
$t = O(1)$, we get an \(O(1)\) round algorithm.
%Consider the assignment to a \(\lfloor \log n \rfloor\) bit chunk \(Y_i\) and 
Assume that the assignment for the previous \(i-1\) chunks \(Y_1\ldots,Y_{i-1}\) have been fixed and we're now setting the bits in $Y_i$.
For each of the (at most \(n\)) possible assignments to \(Y_i\), we assign a unique node \(v\) that is responsible for computing \(\E[\Psi  \mid  Y_1, \ldots,Y_{i}]\).

For each \(u \in V\) let \(E_u\) be a random variable denoting the number of sampled edges incident on \(u\) (note that \(E_u = 0\) if \(u\) is not sampled). For a particular value \(Y_i\), the assigned node \(v\) receives the values \(\E[E_u \mid Y_1,\ldots,Y_{i}]\) from all \(u \in V\) and \(\pr{X_u=1 \mid Y_1,\ldots,Y_{i}}\) from all nodes \(u \in V\). Notice that a node \(u\) can compute the conditional values \(\E[E_u \mid Y_1,\ldots,Y_{i}]\) and \(\pr{X_u=1 \mid Y_1,\ldots,Y_{i}}\), since \(u\) knows the IDs of the vertices in \(N(u)\) and has all the information for this computation.
The node \(v\) then computes \(\sum_{u \in V}{\E[E_u \mid Y_1,\ldots,Y_{i}]}\) and \(\sum_{u \in V}{\pr{X_u = 1  \mid  Y_1,\ldots,Y_{i}}}\) and sends them to a global leader \(w\). Thus, using the values received from all the assigned nodes, \(w\) knows \(\sum_{u \in V}{\E[E_u \mid Y_1,\ldots,Y_{i}]}\) as well as \(\sum_{u \in V}{\pr{X_u = 1  \mid Y_1,\ldots,Y_{i}}}\) for all of the possible \(n\) assignments to \(Y_i\).
Finally, \(w\) chooses the \(Y_{i}\) that minimizes \(\sum_{u \in V}{\E[E_u \mid Y_1,\ldots,Y_{i}]} + \) \(\sum_{u \in V}{\pr{X_u = 1  \mid Y_1,\ldots,Y_{i}}}\) and broadcasts this choice to all nodes in the graph. Note that this implies after \(O(1)\) rounds, the seed has been completely fixed and this is the good seed we wanted to compute. We get a good hash function \(h \in \mathcal{H}\) from Lemma~\ref{lem:MPC-RS-seed}, which gives us the set of nodes \(Z = Z_h\) that satisfies the theorem.
\end{proof}

\subsection{Deterministic 2-Ruling Sets}
The 2-ruling set algorithm of Kothapalli and Pemmaraju \cite{kothapalli12_super_fast_rulin_sets}, with the random sampling step replaced by a deterministic step with the necessary properties is shown below.

\RestyleAlgo{boxruled}
\begin{algorithm2e}\caption{\textsc{Deterministic-2-Ruling-Set}\((G)\)\label{alg:2rsmpc}}
\(U \leftarrow \emptyset\), \(\Delta \leftarrow \Delta(G)\) \tcp{Initially ruling set is empty and all nodes are active}
\While{\(\Delta = \Omega(\log^4 n)\)} {
    Compute an \(O(\Delta^3)\) coloring using one iteration of Linial's coloring algorithm~\cite{linial92_local_distr_graph_algor} \\
    Using Theorem~\ref{thm:generalderand} with parameters \(f = \sqrt{\Delta}\) and \(\eps = 1/3\) deterministically compute a set \(Z\) \\
    Send the subgraph induced by \(Z\) to a single machine \(w\) which computes an MIS \(I\) of \(G[Z]\) \\
    Let \(H\) be the set of nodes in \(G\) with degree at least \(f\Delta^{\eps} = \Delta^{5/6}\) \\
    \(G \leftarrow G[V \setminus (N^{+}(I) \cup H)]\) \tcp{Nodes with a neighbor in \(I\) and high degree nodes are deactivated}
    \(\Delta \leftarrow \Delta(G)\), \(U \leftarrow U \cup I\) \\
}
Run the deterministic low-degree MIS algorithm of~\cite{Censor-HillelPS16} on \(G\) and add the MIS nodes to \(U\) \\
\Return\ \(U\)
\end{algorithm2e}

\begin{theorem}\label{thm:det-rs}
Algorithm~\ref{alg:2rsmpc} deterministcally computes a \(2\)-ruling set of \(G\) in \(O(\log \log n)\) rounds of the linear memory MPC and \cc{} models.
\end{theorem}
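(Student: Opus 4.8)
The plan is to establish two things: that the returned set $U$ is an independent set such that every vertex of $G$ is within two hops of some vertex of $U$ (correctness), and that the whole procedure runs in $O(\log\log n)$ rounds (complexity). Throughout I treat the graph maintained by the loop as a vertex-induced subgraph of the original $G$, so that every edge present in a later iteration is also an edge of the original graph; this lets me transfer any short path found in the current graph back to $G$.

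For correctness I first argue independence. In each iteration $I$ is an MIS of $G[Z]$, hence an independent set of $G$, and immediately afterwards every vertex of $N^{+}(I)$ is deleted; consequently no vertex selected in a later iteration (nor any vertex of the final low-degree MIS) can be adjacent to a vertex already placed in $U$, so $U$ stays independent. For the 2-ruling property I classify each deleted vertex. A vertex removed because it lies in $N^{+}(I)$ is within one hop of $I \subseteq U$. A vertex $v$ removed because it lies in $H$, i.e.\ has degree at least $f\Delta^{\eps} = \Delta^{5/6}$, is handled using the hitting guarantee of Theorem~\ref{thm:generalderand}: if $v \in Z$ it is within one hop of $I$ by maximality of the MIS, and if $v \notin Z$ then $N(v)\cap Z \ne \emptyset$, so $v$ has a neighbor $u \in Z$ which is itself within one hop of $I$ in $G[Z]$, placing $v$ within two hops of $U$. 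Finally, every vertex surviving the loop is dominated within one hop by the MIS computed in the last line. Since $U$ only grows, once a vertex is within two hops of $U$ it stays so, and thus every vertex of $G$ is 2-ruled.

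For the round complexity the key observation is the degree-reduction recurrence. After deleting $H$, every surviving vertex had degree below $\Delta^{5/6}$, and deleting $N^{+}(I)$ only lowers degrees further, so the maximum degree drops from $\Delta$ to at most $\Delta^{5/6}$ each iteration; starting from $\Delta \le n$ this gives $\Delta_i \le n^{(5/6)^i}$, which falls below $c\log^4 n$ after $i = O(\log\log n)$ iterations --- precisely when the loop condition $\Delta = \Omega(\log^4 n)$ (equivalently $\Delta = \Omega(\log^{4/(3\eps)} n)$ for $\eps = 1/3$, the hypothesis of Theorem~\ref{thm:generalderand}) fails. I then argue each iteration costs $O(1)$ rounds: the single Linial color-reduction step~\cite{linial92_local_distr_graph_algor} and the invocation of Theorem~\ref{thm:generalderand} are each $O(1)$ rounds, and the one step that needs care, gathering $G[Z]$ at a single machine, is feasible because Theorem~\ref{thm:generalderand} guarantees $|E(G[Z])| = O(m/f^2) = O(m/\Delta) = O(n)$ (using $m \le n\Delta$), which fits in the $\ot(n)$ memory of one machine and can be delivered in $O(1)$ rounds, using Lenzen's routing~\cite{lenzen13_optim_deter_routin_sortin_conges_clique} in the \cc{} model; broadcasting the resulting MIS $I$ and updating the graph are likewise $O(1)$ rounds. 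Finally, the remaining graph has maximum degree $O(\log^4 n)$, so a deterministic MIS subroutine with round complexity $O(\log\Delta)$ finishes in $O(\log\log n)$ rounds. Summing, the loop contributes $O(\log\log n)\cdot O(1)$ rounds and the final step $O(\log\log n)$ rounds.

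I expect the main obstacle to be the complexity accounting rather than correctness. Two points require the most care: (i) verifying that $G[Z]$ genuinely fits in linear memory so that its MIS can be computed locally and its complement absorbed in $O(1)$ rounds --- this is exactly where the sparsity guarantee of Theorem~\ref{thm:generalderand} with the choice $f = \sqrt{\Delta}$ is used, turning $O(m/f^2)$ into $O(n)$; and (ii) ensuring the final low-degree MIS does not dominate the running time. The latter is the genuinely delicate step: a generic deterministic MIS bound of $O(\log\Delta \cdot \log n)$ would give $O(\log n \log\log n)$ and ruin the claim, so the argument must rely on an MIS algorithm whose cost is $O(\log\Delta)$, which yields $O(\log\log n)$ only because the surviving degree is polylogarithmic. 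I would double-check that the chosen subroutine attains $O(\log\Delta)$ in both the linear-memory MPC and \cc{} models at this degree, so that the $O(\log\log n)$ bound holds in both settings.
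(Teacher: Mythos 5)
Your proposal is correct and follows essentially the same route as the paper's proof: correctness via independence of $U$ plus the two-hop domination of deactivated high-degree vertices through the hitting guarantee of Theorem~\ref{thm:generalderand}, and complexity via $O(1)$ rounds per iteration (sparsity $O(m/f^2)=O(n)$ enabling gathering of $G[Z]$, with Lenzen's routing in \cc{}), the $\Delta \to \Delta^{5/6}$ degree-reduction recurrence giving $O(\log\log n)$ iterations, and a final low-degree MIS. The concern you flag in (ii) is resolved exactly as you anticipate: the paper invokes the low-degree MIS algorithm of Censor-Hillel, Parter, and Schwartzman~\cite{Censor-HillelPS16}, which runs in $O(\log\Delta)$ rounds when $\Delta = O(n^{1/3})$ in both models (or alternatively the $O(\log\Delta+\log\log n)$-round algorithm of~\cite{Czumaj2019}), so the final step costs $O(\log\log n)$ rounds as you require.
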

\begin{proof}
Running one iteration of Linial's coloring algorithm in MPC and \cc{} requires \(O(1)\) rounds. The set \(U\) returned by the algorithm is an independent set, and the high degree vertices deactivated in each iteration have a node in \(U\) that is at most \(2\) hops away. The rest of the nodes are in \(N^{+}(U)\), which proves correctness.

In each iteration of the while loop, the set \(Z\) can be found in \(O(1)\) rounds using Theorem~\ref{thm:generalderand}. With \(f = \sqrt{\Delta}\), Theorem~\ref{thm:generalderand} gives us that \(G[Z]\) has at most \(O(m/\Delta) \le O(n)\) edges. Therefore, \(G[Z]\) can be sent to \(w\) in constant rounds in linear memory MPC (and also in \cc{} using Lenzen's routing protocol~\cite{lenzen13_optim_deter_routin_sortin_conges_clique}). The rest of the steps in the while loop also require constant rounds.

In each iteration, the maximum degree of the active graph falls from \(\Delta\) to \(\Delta^{0.5 + \eps} \le \Delta^{5/6}\) for \(\eps = 1/3\). This implies there can be at most \(O(\log \log \Delta)\) iterations before \(\Delta\) becomes small enough and we exit the while loop.

Finally the deterministic low-degree MIS algorithm of Censor-Hillel, Parter, and Schwartzman~\cite{Censor-HillelPS16} takes \(O(\log \Delta)\) rounds in the \cc{} model when \(\Delta = O(n^{1/3})\). This algorithm also works in the linear memory MPC model with the same round complexity. In our case \(\Delta = O(\log^{4} n)\) so this algorithm can be used, and it requires \(O(\log \log n)\) rounds, which proves the theorem.

One can alternatively use the deterministic MIS algorithm of Czumaj, Davies, and Parter \cite{Czumaj2019} that takes \(O(\log \Delta + \log \log n)\) rounds in the low-memory MPC model. This algorithm will work with the same round complexity in the linear memory MPC and \cc{} models. So the overall running time of Algorithm~\ref{alg:2rsmpc} remains \(O(\log \log n)\) rounds.
\end{proof}

\section{Future Work}\label{sec:conclusion}
The natural direction suggested by our work is whether it can be applied to obtain fast, deterministic 2-ruling set algorithms in the other two MPC memory regimes? Specifically,
it will be interesting to determine if our techniques also yield a $O(\poly \log\log n)$-round 2-ruling set deterministic algorithm in the sublinear memory MPC model. In the superlinear memory MPC model, Harvey, Liaw, and Liu~\cite{HarveyLiawLiuSPAA2018} give a randomized $O(1)$-round MIS algorithm. It would be interesting to see if one can obtain deterministic $O(1)$-round algorithms for MIS and $2$-ruling set in the superlinear memory MPC model.
%Our approach does not seem to extend to \(\beta\)-ruling sets for \(\beta > 2\).
%Can we hope for a complexity of the form \(O(\poly(\log^{(\beta)} n)\) where \(\log^{(t)} n\) in the logarithm function iterated \(t\) times?
%Our deterministic graph sparsification technique does not seem to give any improvement to the deterministic complexity of \(\beta\)-ruling sets for \(\beta > 2\). Can we hope for a complexity of the form \(O(\poly(\log^{(\beta)} n)\) where \(\log^{(t)} n\) in the logarithm function iterated \(t\) times?

% Can we get a conditional lower bound (via Ghaffari-Kuhn and Brandt) for beta-ruling sets in low-memory MPC? 

%Consider the super-linear memory MPC model where we have $O(n^{1+\eps})$ memory per machine, for some constant $\eps > 0$ and we have a graph with $n^{1+c}$ edges, where $\eps < c <= 1$. Can we compute a \(2\)-ruling set or even MIS in $O(1)$ rounds \textit{deterministically} \footnote{Randomized \(O(1)\) round algorithms are known for MIS and many other problems in this model~\cite{HarveyetalSPAA2018}}? Even an \(O(\pll n)\) round deterministic MIS algorithm is not known in this setting.

\paragraph*{Acknowledgement:}
We thank the anonymous reviewer who suggested a simplification to our approach that also led to an improvement in the running time of Algorithm~\ref{alg:2rsmpc}.

\bibliographystyle{alpha}
\bibliography{short_refs}

\newcommand{\etalchar}[1]{$^{#1}$}
\begin{thebibliography}{LPPSP03}

\bibitem[ABI86]{AloneBIJAlg1986}
Noga Alon, L\'{a}szl\'{o} Babai, and Alon Itai.
\newblock {A Fast and Simple Randomized Parallel Algorithm for the Maximal
  Independent Set Problem}.
\newblock {\em J. Algorithms}, 7(4):567–583, December 1986.

\bibitem[BKS17]{BeameKSJACM2017}
Paul Beame, Paraschos Koutris, and Dan Suciu.
\newblock {Communication Steps for Parallel Query Processing}.
\newblock {\em J. ACM}, 64(6), oct 2017.

\bibitem[CDP20a]{Czumaj2019}
Artur Czumaj, Peter Davies, and Merav Parter.
\newblock {Graph Sparsification for Derandomizing Massively Parallel
  Computation with Low Space}.
\newblock In {\em Proceedings of the 32nd ACM Symposium on Parallelism in
  Algorithms and Architectures}, SPAA '20, page 175–185, New York, NY, USA,
  2020.

\bibitem[CDP20b]{CzumajDPPODC2020}
Artur Czumaj, Peter Davies, and Merav Parter.
\newblock {Simple, Deterministic, Constant-Round Coloring in the Congested
  Clique}.
\newblock In {\em Proceedings of the 39th Symposium on Principles of
  Distributed Computing}, PODC '20, page 309–318, New York, NY, USA, 2020.

\bibitem[CPS17]{Censor-HillelPS16}
Keren Censor{-}Hillel, Merav Parter, and Gregory Schwartzman.
\newblock {Derandomizing Local Distributed Algorithms under Bandwidth
  Restrictions}.
\newblock In {\em 31st International Symposium on Distributed Computing, {DISC}
  2017, October 16-20, 2017, Vienna, Austria}, volume~91 of {\em LIPIcs}, 2017.

\bibitem[FMS{\etalchar{+}}10]{FeldmanMSSSTALG2010}
Jon Feldman, S.~Muthukrishnan, Anastasios Sidiropoulos, Cliff Stein, and Zoya
  Svitkina.
\newblock {On Distributing Symmetric Streaming Computations}.
\newblock {\em ACM Trans. Algorithms}, 6(4), sep 2010.

\bibitem[GSZ11]{GoodrichSZISAAC2011}
Michael~T. Goodrich, Nodari Sitchinava, and Qin Zhang.
\newblock {Sorting, Searching, and Simulation in the Mapreduce Framework}.
\newblock In {\em Proceedings of the 22nd International Conference on
  Algorithms and Computation}, ISAAC'11, pages 374--383, Berlin, Heidelberg,
  2011. Springer-Verlag.

\bibitem[HLL18]{HarveyLiawLiuSPAA2018}
Nicholas J.~A. Harvey, Christopher Liaw, and Paul Liu.
\newblock {Greedy and Local Ratio Algorithms in the MapReduce Model}.
\newblock In Christian Scheideler and Jeremy~T. Fineman, editors, {\em
  Proceedings of the 30th on Symposium on Parallelism in Algorithms and
  Architectures, {SPAA} 2018, Vienna, Austria, July 16-18, 2018}, pages 43--52.
  {ACM}, 2018.

\bibitem[HPS14a]{HegemanPSDISC2014}
James~W. Hegeman, Sriram~V. Pemmaraju, and Vivek Sardeshmukh.
\newblock {Near-Constant-Time Distributed Algorithms on a Congested Clique}.
\newblock In Fabian Kuhn, editor, {\em Distributed Computing - 28th
  International Symposium, {DISC} 2014, Austin, TX, USA, October 12-15, 2014.
  Proceedings}, volume 8784 of {\em Lecture Notes in Computer Science}, pages
  514--530. Springer, 2014.

\bibitem[HPS14b]{HegemanPSarxiv2014}
James~W. Hegeman, Sriram~V. Pemmaraju, and Vivek Sardeshmukh.
\newblock {Near-Constant-Time Distributed Algorithms on a Congested Clique}.
\newblock {\em CoRR}, abs/1408.2071, 2014.

\bibitem[KP12]{kothapalli12_super_fast_rulin_sets}
Kishore Kothapalli and Sriram~V. Pemmaraju.
\newblock {Super-Fast 3-Ruling Sets}.
\newblock In {\em {IARCS} Annual Conference on Foundations of Software
  Technology and Theoretical Computer Science, {FSTTCS} 2012, December 15-17,
  2012, Hyderabad, India}, pages 136--147, 2012.

\bibitem[KSV10]{karloff2010}
Howard Karloff, Siddharth Suri, and Sergei Vassilvitskii.
\newblock {A Model of Computation for MapReduce}.
\newblock In {\em Proceedings of the Twenty-First Annual ACM-SIAM Symposium on
  Discrete Algorithms}, pages 938--948, Philadelphia, PA, 2010. Society for
  Industrial and Applied Mathematics.

\bibitem[Len13]{lenzen13_optim_deter_routin_sortin_conges_clique}
Christoph Lenzen.
\newblock {Optimal Deterministic Routing and Sorting on the Congested Clique}.
\newblock In {\em Proceedings of the 2013 ACM Symposium on Principles of
  Distributed Computing}, PODC '13, pages 42--50, New York, NY, USA, 2013. ACM.

\bibitem[Lin92]{linial92_local_distr_graph_algor}
Nathan Linial.
\newblock {Locality in Distributed Graph Algorithms}.
\newblock {\em {SIAM} J. Comput.}, 21(1):193--201, 1992.

\bibitem[LPPSP03]{LotkerPPPSPAA2003}
Zvi Lotker, Elan Pavlov, Boaz Patt-Shamir, and David Peleg.
\newblock {MST Construction in $O(\log \log n)$ Communication Rounds}.
\newblock In {\em Proceedings of the Fifteenth Annual ACM Symposium on Parallel
  Algorithms and Architectures}, SPAA '03, page 94–100, New York, NY, USA,
  2003. Association for Computing Machinery.

\bibitem[Lub86]{LubySICOMP1986}
Michael Luby.
\newblock {A Simple Parallel Algorithm for the Maximal Independent Set
  Problem}.
\newblock {\em SIAM Journal on Computing}, 15(4):1036--1053, 1986.

\bibitem[PY18]{ParterYogevDISC2018}
Merav Parter and Eylon Yogev.
\newblock {Congested Clique Algorithms for Graph Spanners}.
\newblock In {\em 32nd International Symposium on Distributed Computing, {DISC}
  2018, New Orleans, LA, USA, October 15-19, 2018}, volume 121 of {\em LIPIcs},
  pages 40:1--40:18, 2018.

\bibitem[Rag88]{Raghavan1988}
Prabhakar Raghavan.
\newblock {Probabilistic Construction of Deterministic Algorithms:
  Approximating Packing Integer Programs}.
\newblock {\em Journal of Computer and System Sciences}, 37(2):130 -- 143,
  1988.

\bibitem[RB94]{BellareRSFCS1994}
J.~Rompel and M.~Bellare.
\newblock {Randomness-efficient Oblivious Sampling}.
\newblock In {\em 2013 IEEE 54th Annual Symposium on Foundations of Computer
  Science}, pages 276--287, Los Alamitos, CA, USA, nov 1994. IEEE Computer
  Society.

\bibitem[Vad12]{VadhanFTTCS2012}
Salil~P. Vadhan.
\newblock {Pseudorandomness}.
\newblock {\em Foundations and Trends in Theoretical Computer Science},
  7(1–3):1--336, 2012.

\end{thebibliography}

\end{document}